\documentclass[11pt, twoside, letterpaper]{amsart}
\usepackage{amsmath, hyperref, color}
\usepackage{enumerate}
\usepackage{setspace}
\usepackage{soul}


\usepackage[left=2.5cm,top=3cm,hcentering,vcentering]{geometry}


\vfuzz2pt 
\hfuzz2pt 
\newtheorem{thm}{Theorem}[section]
\newtheorem{cor}[thm]{Corollary}

\newtheorem{prop}[thm]{Proposition}

\theoremstyle{definition}
\newtheorem{defn}[thm]{Definition}

\theoremstyle{remark}
\newtheorem{rem}[thm]{Remark}

\numberwithin{equation}{section}
\newcommand{\R}{\mathbb{R}}
\newcommand{\N}{\mathbb{N}}

\newcommand{\ind}{\mathbf{1}}
\newcommand{\prob}{\mathbb{P}}

\newcommand{\FF}{\mathbb{F}}

\newcommand{\PP}{\mathbb{P}}
\newcommand{\QQ}{\mathbb{Q}}
\newcommand{\expec}{\mathbb{E}}
\newcommand{\cF}{\mathcal{F}}

\newcommand{\Xtilde}{\widetilde{X}}
\newcommand{\dbra}[1]{[\kern-0.15em[ #1 ]\kern-0.15em]}
\newcommand{\dbraco}[1]{[\kern-0.15em[ #1 [\kern-0.15em[}
\newcommand{\dbraoc}[1]{]\kern-0.15em] #1 ]\kern-0.15em]}
\newcommand{\dbraoo}[1]{]\kern-0.15em] #1 [\kern-0.15em[}

\newcommand{\be}{\begin{equation}}
\newcommand{\ee}{\end{equation}}
\newcommand{\ba}{\begin{aligned}}
\newcommand{\ea}{\end{aligned}}

\begin{document}


\title[Sure profits via flash strategies]{On the existence of sure profits via flash strategies}

\author[C. Fontana]{Claudio Fontana}
\address[Claudio Fontana]{Department of Mathematics ``Tullio Levi - Civita'', University of Padova, Italy.}
\email{fontana@math.unipd.it}

\author[M. Pelger]{Markus Pelger}
\address[Markus Pelger]{Management Science \& Engineering Department, Stanford University, Huang Engineering Center, Stanford, CA, United States}
\email{mpelger@stanford.edu}

\author[E. Platen]{Eckhard Platen}
\address[Eckhard Platen]{School of Mathematical and Physical Sciences and Finance Discipline Group, University of Technology Sydney, Broadway NSW 2007, Sydney, Australia, and Department of Actuarial Sciences, University of Cape Town.}
\email{Eckhard.Platen@uts.edu.au}

\thanks{The first author gratefully acknowledges the support of the Bruti-Liberati Visiting Fellowship and the hospitality of the Quantitative Finance Research Centre at the Finance Discipline Group at the University of Technology Sydney.}
\subjclass[2010]{60G07, 60G17, 60G44, 91G99}
\keywords{Arbitrage; predictable time; right-continuity; semimartingale; high-frequency trading.\\
\indent{\em JEL Classification.} C02, G12, G14.}

\date{\today}


\maketitle

\begin{abstract}
We introduce and study the notion of sure profit via flash strategy, consisting of a high-frequency limit of buy-and-hold trading strategies. 
In a fully general setting, without imposing any semimartingale restriction, we prove that there are no sure profits via flash strategies if and only if asset prices do not exhibit predictable jumps.
This result relies on the general theory of processes and provides the most general formulation of the well-known fact that, in an arbitrage-free financial market, asset prices (including dividends) should not exhibit jumps of a predictable direction or magnitude at predictable times.
We furthermore show that any price process is always right-continuous in the absence of sure profits.
Our results are  robust  under small transaction costs  and imply that, under minimal assumptions, price changes occurring at scheduled dates should only be due to unanticipated information releases.
\end{abstract}

\section{Introduction}

In the financial markets literature, the importance of allowing for jumps in asset prices at scheduled or predictable dates is widely acknowledged. Indeed, asset prices move in correspondence of macroeconomic news announcements (see \cite{Evans,KV,KimWright,LeeMyk,Rangel}), publication of earnings reports (see \cite{DubJoh,LeeMyk}), dividend payments (see \cite{HJ88}), Federal Reserve meetings (see \cite{piaz01,Piazzesi}), major political decisions\footnote{Recent political events like the Brexit and the election of the American president in 2016 represent striking examples of the impact on financial markets of discontinuities happening at scheduled or predictable dates.}, and all these events take place at dates which are typically known in advance.
In the context of continuous-time models, \cite{Lee} reports significant empirical evidence on jump predictability, while a model of the US Treasury rate term structure with jumps occurring in correspondence of employment report announcement dates is developed in \cite{KimWright}
(see also \cite{GS16,FS18} in the case of credit risky term structures).
Hence, realistic financial models should account for the presence of jumps at predictable times.

According to the efficient market hypothesis, asset prices should fully reflect all available information (see \cite{Fama}). In particular, if asset prices suddenly change at  scheduled or predictable dates, then this can only be due to the release of unanticipated information. Indeed, under market efficiency, if the released information does not contain any surprise element, then it should be already incorporated in market prices and, hence, prices should not move. 
This implication of market efficiency is coherent with absence of arbitrage: if a price process is known to jump at a given point in time, then the direction and the size of the jump should not be perfectly known in advance, otherwise arbitrage profits would be possible. As pointed out in \cite[Section 2.1]{ABD}, this can be easily understood by analogy to discrete-time models, where absence of arbitrage implies that the return over each single trading period can never be predicted.
Summing up, market efficiency and absence of arbitrage suggest that asset prices cannot exhibit {\em predictable jumps}, i.e., discontinuities such that the time of the jump {\em and} the direction (or even the exact magnitude) of the jump can be known in advance.
 
The goal of the present paper is to characterize the minimal no-arbitrage condition under which asset prices do not exhibit predictable jumps.
We work in a general stochastic model of a financial market and we refrain from imposing any assumption on the price process, except for mild path regularity. In particular, we do not assume the semimartingale property, relying instead on fundamental tools from the general theory of processes. 
We only allow for realistic trading strategies, consisting of bounded buy-and-hold positions and high-frequency limits thereof, which we name {\em flash strategies} (Definition \ref{def:flash}). 
Our central result (Theorem \ref{thm:main}) shows that the existence of predictable jumps is equivalent to the possibility of realizing sure profits via flash strategies, and even constant profits if also the size of the jump can be predicted. 
In the semimartingale case, these sure profits can be realized instantaneously (Corollary \ref{cor:semimg}).
We furthermore show that right-continuity is an indispensable requirement in order to exclude constant profits from flash strategies (Section \ref{sec:RC}).
Since constant profits persist under small transaction costs (Section \ref{sec:robust}), this provides a sound justification for the ubiquitous assumption of right-continuity in mathematical finance.
From the probabilistic standpoint, our approach sheds new light on path properties of stochastic processes, linking them to economically meaningful no-arbitrage requirements.

This study is motivated by the possibility of arbitrage in high-frequency markets. In particular, our notion of a flash strategy is similar to a {\em directional event-based strategy} (see \cite[Chapter 9]{Aldridge}). Such strategies aim at realizing positive profits in correspondence of some predetermined market events. In the case of anticipated events, such as scheduled macroeconomic announcements, the strategy is opened ahead of the event and liquidated just after the event. The holding period is typically very short and the speed of response determines the trade gain. 
Our notion of flash strategy can also represent a {\em latency arbitrage strategy} (see  \cite[Chapter 12]{Aldridge}): if the same asset is traded in two markets at slightly different prices, then high-frequency traders can arbitrage the price difference by simultaneously trading in the two markets. Since our price process is allowed to be multi-dimensional, this situation can be easily captured by representing the prices of the same asset on different markets as different components of a vector price process.
Other kinds of high-frequency strategies that can be represented via flash strategies include {\em front-running strategies}, as described in the best-selling book \cite{Lewis} (see also \cite{Pro_review}).
Our results indicate that the existence of predictable jumps lies at the origin of the sure profits generated by these types of high-frequency arbitrage strategies.
As shown in the recent empirical analysis of \cite{Ted} on the Eurex option market, sure profits via flash strategies can occur in financial markets (see Remark \ref{rem:Ted}).

The possibility of sure (or even constant) profits generated by predictable jumps is also related to the classical issue of the behavior of ex-dividend prices at dividend payment dates, as considered in \cite{HJ88} (see also \cite{Battauz}). Typically, the dividend payment date and the amount of the dividend are known in advance (i.e., they are predictable). \cite{HJ88} show that, if there exists a martingale measure, then either the ex-dividend price drops exactly by the amount of the dividend or the jump in the ex-dividend price cannot be predictable. 
In this perspective, our results can be regarded as the most general formulation of the seminal result of \cite{HJ88} (to this effect, see Remark \ref{rem:escrowed_div}).

The rest of the paper is organized as follows. Section \ref{sec:setting} introduces the probabilistic setting. The class of trading strategies under consideration is defined in Section \ref{sec:strategies}. Section \ref{sec:main} contains our central result, characterizing predictable jumps in terms of sure profits via flash strategies.
The role of right-continuity and the robustness of sure profits via flash strategies are analysed respectively in Sections \ref{sec:RC} and \ref{sec:robust}, while the semimartingale case is studied in Section \ref{sec:semimg}. We discuss the relations with other no-arbitrage conditions in Section \ref{sec:relations} and we then conclude in Section \ref{sec:conclusions}.

\section{Sure and constant profits via flash strategies}

\subsection{Setting}	\label{sec:setting}

Let $(\Omega,\cF,\PP)$ be a probability space endowed with a filtration $\FF=(\cF_t)_{t\geq0}$ satisfying the usual conditions of right-continuity and completeness and supporting a c\`adl\`ag (right-continuous with left limits) real-valued\footnote{We restrict our presentation to the case of a one-dimensional process $X$ for clarity of notation only. The multi-dimensional case is completely analogous and can be treated with the same tools.}  adapted process $X=(X_t)_{t\geq0}$. 
The filtration $\FF$ represents the flow of available information, while the process $X$ represents the gains process of a risky asset, discounted with respect to some baseline security. In the case of a dividend paying asset, this corresponds to the sum of the discounted ex-dividend price and the cumulated discounted dividends.
We do not assume that $X$ is a semimartingale nor that the initial sigma-field $\cF_0$ is trivial. The results presented below apply to any model in a finite time horizon $T<+\infty$ by simply considering the stopped process $X^T$.
We denote by $\Delta X=(\Delta X_t)_{t\geq0}$ the jump process of $X$, with $\Delta X_t:=X_t-X_{t-}$, for $t\geq0$. Following the convention of \cite{MR1943877}, we let $\Delta X_0=0$.
We refer to \cite{MR1943877} for all unexplained notions related to the general theory of stochastic processes.

A stopping time $T$ is said to be a {\em jump time} of $X$ if $\dbra{T}\subseteq\{\Delta X\neq0\}$ (up to an evanescent set)\footnote{We recall that the graph of a stopping time $T$ is defined as $\dbra{T}=\{(\omega,t)\in\Omega\times\R_+ : T(\omega)=t\}$. Similarly, for two stopping times $\sigma$ and $\tau$, we can define the stochastic interval $\dbraoc{\sigma,\tau}=\{(\omega,t)\in\Omega\times\R_+:\sigma(\omega)<t\leq\tau(\omega)\}$.}. 
We say that $X$ exhibits {\em predictable jumps} if there exists at least one jump time $T$ which is a predictable time and such that the random variable $\ind_{\{T<+\infty,\,\Delta X_T>0\}}$ is $\cF_{T-}$-measurable.
Strengthening this definition, we say that $X$ exhibits {\em fully predictable jumps} if there exists at least one predictable jump time $T$ such that the random variable $\Delta X_T\ind_{\{T<+\infty\}}$ is $\cF_{T-}$-measurable.
In other words, $X$ exhibits predictable jumps if there exists at least one predictable jump time at which the direction (and even the size, in the case of a fully predictable jump) of the jump is known just before the occurrence of the jump.
We aim at relating the absence of predictable jumps (and of fully predictable jumps) to minimal and realistic no-arbitrage properties.

\subsection{Buy-and-hold strategies and flash strategies}	\label{sec:strategies}

We describe the activity of trading in the financial market according to the following definition. 

\begin{defn}	\label{def:buy_and_hold}
A {\em buy-and-hold strategy} is a stochastic process $h$ of the form $h=\xi\ind_{\dbraoc{\sigma,\tau}}$, where $\sigma$ and $\tau$ are two bounded stopping times such that $\sigma\leq\tau$ a.s. and $\xi$ is a bounded $\cF_{\sigma}$-measurable random variable.
\end{defn}

A buy-and-hold strategy corresponds to the simplest possible trading strategy: a portfolio $\xi$ is formed at time $\sigma$ and liquidated at time $\tau$. Note that the portfolio $\xi$ is restricted to be bounded, thus excluding arbitrarily large positions in the traded assets.
For a buy-and-hold strategy $h$, the gains from trading at date $t$ are given by $(h\cdot X)_t:=\xi(X_{\tau\wedge t}-X_{\sigma\wedge t})$, for $t\geq0$.

\begin{defn}	\label{def:flash}
A {\em flash strategy} is a sequence $(h^n)_{n\in\N}$ of buy-and-hold strategies $h^n=\xi^n\ind_{\dbraoc{\sigma_n,\tau_n}}$ such that the random variables $(\xi^n)_{n\in\N}$ are bounded uniformly in $n$ and the following two properties hold a.s. for $n\rightarrow+\infty$:
\begin{enumerate}[(i)]
\item the sequences $(\sigma_n)_{n\in\N}$ and $(\tau_n)_{n\in\N}$ converge to some stopping time $\tau$ with $\PP(\tau<+\infty)>0$;
\item the random variables $(\xi^n)_{n\in\N}$ converge to some random variable $\xi$. 
\end{enumerate}
A flash strategy $(h^n)_{n\in\N}$ is said to generate a {\em sure profit} if $(h^n\cdot X)_t$ converges a.s. to $\zeta\ind_{\{\tau\leq t\}}$, for all $t\geq0$, for some random variable $\zeta$ such that $\{\tau<+\infty\}\subseteq\{\zeta>0\}$.
If $\PP(\zeta=c)=1$, for some constant $c>0$, then the flash strategy $(h^n)_{n\in\N}$ is said to generate a {\em constant profit}.
\end{defn}

A flash strategy represents the possibility of investing at higher and higher frequencies. In the limit, the strategy converges to a (bounded) position $\xi$ which is constructed and then immediately liquidated at some random time $\tau$. If by doing so and starting from zero initial wealth an investor can reach a strictly positive amount of wealth (provided that the investor trades at all, i.e., from time $\tau$ onwards), then the flash strategy is said to generate a {\em sure profit}.
In the case of a {\em constant profit}, the amount of wealth generated by the flash strategy is perfectly known in advance.
The requirement that the positions $(\xi^n)_{n\in\N}$ are uniformly bounded means that an investor is not allowed to make larger and larger trades as the holding period $\tau_n-\sigma_n$ converges to zero. 
This makes flash strategies feasible by placing market orders in financial markets with finite liquidity.
Observe also that no trading activity occurs  in the limit on the event $\{\tau=+\infty\}$.

In the limit, a sure profit does not involve any risk, since the gains from trading converge to a strictly positive random variable. Moreover, it turns out that the components $(h^n)_{n\in\N}$ of a flash strategy generating a sure profit can be chosen in such a way that the potential losses incurred by {\em each} individual buy-and-hold strategy $h^n$ are uniformly bounded, for all sufficiently large $n$ (see Section \ref{sec:main}).
Note also that, if a flash strategy generates a constant profit for {\em some} $c>0$, then there exist constant profits for {\em every} $c>0$, since the flash strategy can be arbitrarily rescaled.
A further important property of the notion of constant profit via flash strategies is its robustness with respect to small transaction costs (see Section \ref{sec:robust} below).


\begin{rem}	\label{rem:Ted}
Sure, and even constant, profits via flash strategies can occur in financial markets. For instance, in a recent empirical analysis of the Eurex option market, \cite{Ted} demonstrates the existence of arbitrage strategies consisting of two opposed market orders (i.e., buy and sell) executed within a time window of less than three seconds and leading to riskless immediate gains.
Such strategies are shown to be profitable for market makers, who face reduced transaction fees (to this effect, see also Section \ref{sec:robust}).
\end{rem}

\subsection{Predictable jumps and sure profits via flash strategies}	\label{sec:main}

The following theorem shows that the absence of sure profits via flash strategies is equivalent to the absence of predictable jumps.
This result relies on the fact that predictable jumps are anticipated by a sequence of precursory signals which can be used to construct a sequence of buy-and-hold strategies forming a flash strategy.

\begin{thm} 	\label{thm:main}
The process $X$ does not exhibit predictable (fully predictable, resp.) jumps if and only if there are no sure (constant, resp.) profits via flash strategies.
\end{thm}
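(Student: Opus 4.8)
The plan is to prove the two equivalences (``no sure profits $\Leftrightarrow$ no predictable jumps'' and ``no constant profits $\Leftrightarrow$ no fully predictable jumps'') in parallel, treating only the one-dimensional case. The implication ``predictable jumps $\Rightarrow$ sure profits'' is constructive and comparatively routine; the substantial direction is its converse, which I would establish in contrapositive form: starting from a flash strategy generating a sure profit, I would exhibit a predictable jump time $\tau$ of $X$ at which $\operatorname{sign}\Delta X_\tau$ is $\cF_{\tau-}$-measurable (and, in the constant-profit case, at which $\Delta X_\tau$ itself is $\cF_{\tau-}$-measurable).

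So fix a flash strategy $h^n=\xi^n\ind_{\dbraoc{\sigma_n,\tau_n}}$ generating a sure profit: $\sigma_n,\tau_n\to\tau$ and $\xi^n\to\xi$ a.s., $\sup_n\|\xi^n\|_\infty<\infty$, $\PP(\tau<\infty)>0$, and $(h^n\cdot X)_t\to\zeta\ind_{\{\tau\le t\}}$ a.s.\ for every $t$, with $\{\tau<\infty\}\subseteq\{\zeta>0\}$. For $\omega\in\{\tau<\infty\}$ and any $t>\tau(\omega)$ one has $(h^n\cdot X)_t=\xi^n(X_{\tau_n}-X_{\sigma_n})$ for all large $n$, hence $\xi^n(X_{\tau_n}-X_{\sigma_n})\to\zeta$ on $\{\tau<\infty\}$. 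Since $X$ is c\`adl\`ag, hence bounded on $[0,\tau(\omega)+1]$, the sequence $(X_{\tau_n}-X_{\sigma_n})_n$ is bounded on $\{\tau<\infty\}$; this forces $\xi\neq0$ there, and then $X_{\tau_n}-X_{\sigma_n}\to\zeta/\xi\neq0$. Invoking the elementary fact that every limit point of $(X_{s_n})_n$, for $s_n\to\tau$, lies in $\{X_{\tau-},X_\tau\}$, I get $\zeta/\xi\in\{0,\pm\Delta X_\tau\}$, so $\zeta/\xi=\pm\Delta X_\tau$ and in particular $\Delta X_\tau\neq0$ on $\{\tau<\infty\}$: thus $\tau$ is a jump time of $X$. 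A short case analysis based on $\sigma_n\le\tau_n$ and $\Delta X_\tau\neq0$ then rules out oscillation of $(X_{\sigma_n})_n$ and the alternative $X_{\sigma_n}\to X_\tau$, $X_{\tau_n}\to X_{\tau-}$, leaving $X_{\sigma_n}\to X_{\tau-}$ and $X_{\tau_n}\to X_\tau$; hence $\sigma_n<\tau\le\tau_n$ for all large $n$ on $\{\tau<\infty\}$, $\zeta=\xi\,\Delta X_\tau$, and $\operatorname{sign}\xi=\operatorname{sign}\Delta X_\tau$ on $\{\tau<\infty\}$ (and $\PP(\tau=0)=0$ since $\Delta X_0=0$ by convention).

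It remains to show that $\tau$ is predictable and that the jump direction (resp.\ size) is $\cF_{\tau-}$-measurable. Setting $S_n:=\inf_{m\ge n}\sigma_m$, each $S_n$ is a bounded stopping time, $S_n\uparrow\liminf_m\sigma_m=\tau$, and $S_n<\tau$ everywhere (on $\{\tau<\infty\}$ because $\sigma_m<\tau$ for all large $m$), so $(S_n)_n$ announces $\tau$, which is therefore predictable. For the measurability I would invoke the standard fact that $B\cap\{\rho<\tau\}\in\cF_{\tau-}$ whenever $\rho$ is a stopping time and $B\in\cF_\rho$: applied to $\rho=\sigma_n$ this gives that $\xi^n\ind_{\{\sigma_n<\tau\}}$ is $\cF_{\tau-}$-measurable, and since $\ind_{\{\sigma_n<\tau\}}=1$ eventually on $\{\tau<\infty\}$ one has $\xi^n\ind_{\{\sigma_n<\tau\}}\to\xi$ there; as $\{\tau<\infty\}\in\cF_{\tau-}$, the variable $\xi\ind_{\{\tau<\infty\}}=\ind_{\{\tau<\infty\}}\limsup_n\bigl(\xi^n\ind_{\{\sigma_n<\tau\}}\bigr)$ is $\cF_{\tau-}$-measurable. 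Therefore $\ind_{\{\tau<\infty,\,\Delta X_\tau>0\}}=\ind_{\{\xi\ind_{\{\tau<\infty\}}>0\}}$ is $\cF_{\tau-}$-measurable, so $X$ exhibits a predictable jump; and if $\zeta\equiv c>0$ then $\Delta X_\tau=c/\xi$ on $\{\tau<\infty\}$, so $\Delta X_\tau\ind_{\{\tau<\infty\}}$ is $\cF_{\tau-}$-measurable and the jump is fully predictable. This settles the two ``only if'' implications.

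For the converse implications, let $T$ be a predictable jump time with $\PP(T<\infty)>0$ such that $\ind_{\{T<\infty,\Delta X_T>0\}}$ is $\cF_{T-}$-measurable; let $(T_n)_n$ announce $T$ and put $\sigma_n:=T_n\wedge n$, $\tau_n:=T\wedge n$, $\tau:=T$, $\xi:=\operatorname{sign}(\Delta X_T)\ind_{\{T<\infty\}}$ and $\xi^n:=\expec[\xi\mid\cF_{\sigma_n}]$. Since $(\sigma_n)_n$ again announces $T$ one has $\cF_{T-}=\bigvee_n\cF_{\sigma_n}$, so $\xi^n\to\xi$ a.s.\ by L\'evy's theorem, and the $\xi^n$ are uniformly bounded by $1$; for fixed $t$ and $n>t$, $(h^n\cdot X)_t=\xi^n\bigl(X_{T\wedge t}-X_{T_n\wedge t}\bigr)\to\xi\,\Delta X_T\,\ind_{\{T\le t\}}=|\Delta X_T|\,\ind_{\{T\le t\}}$ by right-continuity and the existence of left limits, so $(h^n)_n$ is a flash strategy generating a sure profit with $\zeta=|\Delta X_T|\ind_{\{T<\infty\}}$. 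In the fully predictable case one first replaces $T$ by its restriction $T'$ to the $\cF_{T-}$-measurable set $\{T<\infty,\,|\Delta X_T|\ge\delta\}$, with $\delta>0$ chosen so that this set has positive probability (the restriction of a predictable time to an $\cF_{T-}$-set being again predictable), and takes $\xi:=(c/\Delta X_{T'})\ind_{\{T'<\infty\}}$, now bounded by $c/\delta$; the same computation then produces a constant profit equal to $c$. The main obstacle is the ``only if'' direction, and within it the passage from the bare a.s.\ convergence of $\sigma_n$ and $\tau_n$ to the two structural facts that $\sigma_n$ reaches $\tau$ strictly from the left — which is what makes $\tau$ predictable — and that the limiting bounded position $\xi$ is measurable with respect to the strictly smaller $\sigma$-field $\cF_{\tau-}$; both require combining the path regularity of c\`adl\`ag functions with careful measurability bookkeeping for the filtration.
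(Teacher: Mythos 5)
Your proof is correct and follows the same overall architecture as the paper's: the constructive direction builds the flash strategy from an announcing sequence with positions given by conditional expectations converging along $\cF_{\sigma_n}\uparrow\cF_{T-}$, and the converse extracts the identity $\zeta=\xi\,\Delta X_\tau$ on $\{\tau<+\infty\}$, from which the jump time, its predictability, and the $\cF_{\tau-}$-measurability of the direction (resp.\ size) all follow. The two hardest steps are, however, implemented differently. For the identity $\zeta=\xi\Delta X_\tau$ the paper interchanges the limits in $n$ and in $t-1/k\uparrow t$ via a Moore--Osgood uniform-convergence argument and reads off the jump of the limiting gains process, whereas you argue pathwise that every limit point of $X_{s_n}$ with $s_n\to\tau$ lies in $\{X_{\tau-},X_\tau\}$ and use $\sigma_n\le\tau_n$ to exclude every configuration except $X_{\sigma_n}\to X_{\tau-}$, $X_{\tau_n}\to X_\tau$; this is more elementary and has the bonus of delivering $\sigma_n<\tau\le\tau_n$ eventually, which both proofs need anyway. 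For the predictability of $\tau$ the paper observes that $\dbra{\tau}=\lim_{n}\dbraoc{\sigma_n,\tau_n}$ is a limit of predictable sets, while you build the explicit announcing sequence $S_n=\inf_{m\ge n}\sigma_m$; this works, but note that $S_n<\tau$ holds only almost surely (since ``$\sigma_m<\tau$ eventually'' is an a.s.\ statement), so you are implicitly invoking the standard fact that an a.s.-foretellable time is predictable under the usual conditions --- the paper's route carries the same null-set caveats, so this is not a gap. Finally, your constant-profit construction (truncating $|\Delta X_T|$ from below only and conditioning the bounded variable $c/\Delta X_{T'}$ directly) is a mild simplification of the paper's normalized conditional expectation of $\Delta X_\tau$ over $\{|\Delta X_T|\in[1/k,k]\}$. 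I see no genuine gaps.
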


\begin{proof}
We first prove that if $X$ exhibits predictable jumps, then there exist sure profits. To this effect, let $T$ be a predictable time with $\dbra{T}\subseteq\{\Delta X\neq0\}$ such that the random variable $\ind_{\{T<+\infty,\,\Delta X_T>0\}}$ is $\cF_{T-}$-measurable. For simplicity of notation, we set $\Delta X_{T}=0$ on $\{T=+\infty\}$.
In view of \cite[Theorem I.2.15]{MR1943877}, there exists an announcing sequence $(\rho_n)_{n\in\N}$ of stopping times satisfying $\rho_n<T$ and such that $\rho_n$ increases to $T$ for $n\rightarrow+\infty$.
For each $n\in\N$, let $\sigma_n:=\rho_n\wedge n$ and $\tau_n:=T\wedge n$ and define the sequence $(h^n)_{n\in\N}$ by 
\be	\label{eq:constr_thm_sure}
h^n = \xi^n\ind_{\dbraoc{\sigma_n,\tau_n}},
\qquad\text{ where }\;
\xi^n := 2\,\PP(\Delta X_T>0|\cF_{\sigma_n})-1,
\qquad\text{ for every }n\in\N.
\ee
As a consequence of the martingale convergence theorem, the sequence $(\xi^n)_{n\in\N}$ converges a.s. to the random variable
\[
\xi := 2\,\PP(\Delta X_T>0|\cF_{T-})-1
= \ind_{\{\Delta X_T>0\}} - \ind_{\{\Delta X_T\leq0\}},
\]
where we have used the fact that $\ind_{\{\Delta X_T>0\}}$ is $\cF_{T-}$-measurable.
This shows that $(h^n)_{n\in\N}$ is a flash strategy in the sense of Definition \ref{def:flash}. 
To prove that it generates a sure profit, it suffices to remark that, for every $t\geq0$, it holds that $\lim_{n\rightarrow+\infty}X_{\tau_n\wedge t}=X_{T\wedge t}$ and
\[
\lim_{n\rightarrow+\infty}X_{\sigma_n\wedge t}
= X_{T-}\ind_{\{T\leq t\}} + X_t\ind_{\{T>t\}},
\]
so that
\[
\lim_{n\rightarrow+\infty}(h^n\cdot X)_t 
= \lim_{n\rightarrow+\infty}\bigl(\xi^n(X_{\tau_n\wedge t}-X_{\sigma_n\wedge t})\bigr)
= \xi\Delta X_{T}\ind_{\{T\leq t\}}
= |\Delta X_T|\ind_{\{T\leq t\}}
\quad\text{ a.s.},
\]
thus showing that $(h^n)_{n\in\N}$ generates a sure profit.

We now turn to the converse implication. Let $(h^n)_{n\in\N}$ be a flash strategy, composed of elements of the form $h^n=\xi^n\ind_{\dbraoc{\sigma_n,\tau_n}}$, generating a sure profit with respect to a random variable $\zeta$ and a stopping time $\tau$ with $\{\tau<+\infty\}\subseteq\{\zeta>0\}$.
It can be checked that $\lim_{k\rightarrow+\infty}(h^n\cdot X)_{t-1/k}=(h^n\cdot X)_{t-}$ uniformly over $n\in\N$. Indeed, defining $\bar{\xi}:=\sup_{n\in\N}|\xi^n|$ (which is a bounded random variable due to  Definition \ref{def:flash}), it holds that
\begin{align*}
& \lim_{k\rightarrow+\infty}\sup_{n\in\N}\,\bigl|(h^n\cdot X)_{t-\frac{1}{k}}-(h^n\cdot X)_{t-}\bigr|
= \lim_{k\rightarrow+\infty}\sup_{n\in\N}\,\bigl|\xi^n\bigl(X^{\tau_n}_{t-\frac{1}{k}}-X^{\sigma_n}_{t-\frac{1}{k}}\bigr)-\xi^n\bigl(X^{\tau_n}_{t-}-X^{\sigma_n}_{t-}\bigr)\bigr|	\\
&\quad \leq \bar{\xi} \lim_{k\rightarrow+\infty}\Bigl(
\sup_{n\in\N}\ind_{\{t-\frac{1}{k}<\tau_n<t\}}|X_{\tau_n}-X_{t-\frac{1}{k}}|
+\sup_{n\in\N}\ind_{\{t-\frac{1}{k}<\sigma_n<t\}}|X_{\sigma_n}-X_{t-\frac{1}{k}}|\Bigr)\\
&\quad \leq 2\,\bar{\xi} \lim_{k\rightarrow+\infty}\sup_{u\in(t-\frac{1}{k},t)}|X_u-X_{t-}| = 0.
\end{align*}
Hence, by the Moore-Osgood theorem, we can conclude that, for every $t\geq0$,
\begin{equation}	\label{eq:jump_sure}
\zeta\ind_{\{t=\tau\}}
= \lim_{n\rightarrow+\infty}(h^n\cdot X)_t - \lim_{k\rightarrow+\infty}\lim_{n\rightarrow+\infty}(h^n\cdot X)_{t-\frac{1}{k}}
= \Delta X_t\,\bar{h}_t
\qquad\text{a.s.},
\end{equation}
with $\bar{h}_t:=\lim_{n\rightarrow+\infty}h^n_t=\lim_{n\rightarrow+\infty} \xi^n\ind_{\{\sigma_n<t\leq\tau_n\}}$, for all $t\geq0$.
Letting $\xi=\lim_{n\rightarrow+\infty}\xi^n$ (see Definition \ref{def:flash}), a first implication of \eqref{eq:jump_sure} is that $\{\tau<+\infty\}\subseteq\{\xi\neq0\}$ and $\dbra{\tau}\subseteq\{\Delta X\neq0\}$, up to an evanescent set, so that $\tau$ is a jump time of $X$. 
Furthermore, always by \eqref{eq:jump_sure}, on $\{\tau<+\infty\}$ it holds that $\{\Delta X_{\tau}>0\}=\{\bar{h}_{\tau}>0\}$.
Noting that the random variables $\xi^n\ind_{\{\sigma_n<\tau\}}$ and $\ind_{\{\tau\leq\tau_n\}}$ are $\cF_{\tau-}$-measurable for every $n\in\N$ (see e.g. \cite[\textsection~I.1.17]{MR1943877}), this implies that $\ind_{\{\tau<+\infty,\,\Delta X_{\tau}>0\}}$ is $\cF_{\tau-}$-measurable as well.
To complete the proof, it remains to show that $\tau$ is a predictable time.
For each $n\in\N$, let $A_n:=\{\sigma_n<\tau\leq\tau_n\}\cap\{\xi^n\neq0\}$ and note that $A_n\subseteq\{\tau<+\infty\}$, since each stopping time $\tau_n$ is bounded.
Moreover, it holds that
\[
\lim_{n\rightarrow+\infty}\ind_{A_n}
= \lim_{n\rightarrow+\infty}\ind_{\{\sigma_n<\tau\leq\tau_n\}}\xi^n
\frac{\ind_{\{\xi^n\neq0\}}}{\xi^n}
= \frac{\zeta}{\xi\Delta X_{\tau}}\ind_{\{\tau<+\infty\}} 
\quad\text{ a.s. }
\]
This identity shows that the sequence $(A_n)_{n\in\N}$ is convergent, with $\lim_{n\rightarrow+\infty}A_n=\{\tau<+\infty\}$ and $\xi\Delta X_{\tau}=\zeta$ on $\{\tau<+\infty\}$ (up to a $\PP$-nullset).
Since the stopping times $(\sigma_n)_{n\in\N}$ and $(\tau_n)_{n\in\N}$ converge a.s. to $\tau$ for $n\rightarrow+\infty$, this implies that $\dbra{\tau}\subseteq\liminf_{n\rightarrow+\infty}\dbraoc{\sigma_n,\tau_n}\subseteq\limsup_{n\rightarrow+\infty}\dbraoc{\sigma_n,\tau_n}\subseteq\dbra{\tau}$, so that $\dbra{\tau}=\lim_{n\rightarrow+\infty}\dbraoc{\sigma_n,\tau_n}$. Since each stochastic interval $\dbraoc{\sigma_n,\tau_n}$ is a predictable set (see e.g. \cite[Proposition I.2.5]{MR1943877}), it follows that $\dbra{\tau}$ is also a predictable set, i.e., $\tau$ is a predictable time.

Let us now prove that $X$ exhibits fully predictable jumps if and only if there exists a flash strategy generating a constant profit, following a similar line of reasoning as in the first part of the proof.
Let $T$ be a predictable time with $\dbra{T}\subseteq\{\Delta X\neq0\}$ such that the random variable $\Delta X_{T}\ind_{\{T<+\infty\}}$ is $\cF_{T-}$-measurable. 
Fix some constant $k\geq1$ such that $\PP(T<+\infty,|\Delta X_{T}|\in[1/k,k])>0$ and define the stopping time $\tau:=T\ind_{\{|\Delta X_{T}|\in[1/k,k]\}}+\infty\ind_{\{|\Delta X_{T}|\notin[1/k,k]\}}$. 
By \cite[Proposition I.2.10]{MR1943877}, $\tau$ is a predictable time and, therefore, there exists an announcing sequence $(\rho_n)_{n\in\N}$ of stopping times satisfying $\rho_n<\tau$ and such that $\rho_n$ increases to $\tau$ for $n\rightarrow+\infty$.
Similarly as in the first part of the proof, let $\sigma_n:=\rho_n\wedge n$ and $\tau_n:=\tau\wedge n$, for each $n\in\N$, and define the sequence $(h^n)_{n\in\N}$ by 
\be	\label{eq:constr_thm_constant}
h^n = \xi^n\ind_{\dbraoc{\sigma_n,\tau_n}},
\qquad\text{ where }\;
\xi^n := k\frac{\bigl|\expec[\Delta X_{\tau}|\cF_{\sigma_n}]\bigr|\wedge\frac{1}{k}}{\expec[\Delta X_{\tau}|\cF_{\sigma_n}]},
\qquad\text{ for every }n\in\N,
\ee
with the conventions $\Delta X_{\tau}=0$ on $\{\tau=+\infty\}$ and $\frac{0}{0}=0$.
By construction, it holds that $|\xi^n|\leq k$, for every $n\in\N$, so that $(h^n)_{n\in\N}$ is well-defined as a sequence of buy-and-hold strategies. 
Moreover, the sequence $(\xi^n)_{n\in\N}$ converges a.s. to the random variable
\[
\xi := 
k\frac{\bigl|\expec[\Delta X_{\tau}|\cF_{\tau-}]\bigr|\wedge\frac{1}{k}}{\expec[\Delta X_{\tau}|\cF_{\tau-}]}
= k\frac{\bigl|\Delta X_{\tau}\bigr|\wedge\frac{1}{k}}{\Delta X_{\tau}}
= \frac{\ind_{\{\Delta X_{\tau}\neq0\}}}{\Delta X_{\tau}},
\]
where the second equality makes use of the fact that $\Delta X_{\tau}$ is $\cF_{\tau-}$-measurable, as follows from the identity $\Delta X_{\tau}=\Delta X_{T}\ind_{\{|\Delta X_{T}|\in[1/k,k]\}}$ together with the $\cF_{T-}$-measurability of $\Delta X_{T}\ind_{\{T<+\infty\}}$.
This shows that $(h^n)_{n\in\N}$ is a flash strategy in the sense of Definition \ref{def:flash}. 
Moreover, it holds that
\[
\lim_{n\rightarrow+\infty}(h^n\cdot X)_t 
= \lim_{n\rightarrow+\infty}\bigl(\xi^n(X_{\tau_n\wedge t}-X_{\sigma_n\wedge t})\bigr)
= \xi\Delta X_{\tau}\ind_{\{\tau\leq t\}}
= \ind_{\{\tau\leq t\}}
\quad\text{ a.s.},
\]
thus showing that $(h^n)_{n\in\N}$ generates a constant profit with respect to $c=1$.

Conversely, let $(h^n)_{n\in\N}$ be a flash strategy, with $h^n=\xi^n\ind_{\dbraoc{\sigma_n,\tau_n}}$, $n\in\N$, generating a constant profit with respect to $c>0$ and a stopping time $\tau$.
Similarly as in the case of a sure profit, it holds that
$c = \Delta X_{\tau}\lim_{n\rightarrow+\infty}h^n_{\tau}$ a.s. on $\{\tau<+\infty\}$.
This implies that $\dbra{\tau}\subseteq\{\Delta X\neq0\}$ up to an evanescent set, so that $\tau$ is a jump time of $X$. Moreover, it holds that $\Delta X_{\tau}=c/(\lim_{n\rightarrow+\infty} \xi^n\ind_{\{\sigma_n<\tau\leq\tau_n\}})$ a.s. on $\{\tau<+\infty\}$, from which the $\cF_{\tau-}$-measurability of $\Delta X_{\tau}\ind_{\{\tau<+\infty\}}$  follows.
Finally, the predictability of $\tau$ can be shown exactly as above in the case of a sure profit.
\end{proof}

\begin{rem}	\label{rem:escrowed_div}
Examples of models allowing for constant profits via flash strategies are given by the {\em escrowed dividend models} introduced in \cite{Roll,Geske,Whaley} (see also the analysis in \cite{HJ88}). Indeed, such models consider an asset paying a deterministic dividend at a known date and assume that the ex-dividend price drops by a fixed fraction $\delta\in(0,1)$ of the dividend at the dividend payment date. This corresponds to a fully predictable jump of the process $X$ and, hence, in view of Theorem \ref{thm:main}, can be exploited to generate a constant profit via a flash strategy.
\end{rem}

It is important to remark that, although a flash strategy $(h^n)_{n\in\N}$ generating a sure profit does not involve any risk in the limit, each individual buy-and-hold strategy $h^n$ carries the risk of potential losses. However, a flash strategy can be constructed in such a way that losses are uniformly bounded, as we are going to show in the remaining part of this section. 
This is an important property of flash strategies, especially in view of their practical applicability.

By Theorem \ref{thm:main}, there are sure profits via flash strategies if and only if $X$ exhibits predictable jumps. Hence, let $T$ be a predictable time with $\dbra{T}\subseteq\{\Delta X\neq0\}$ such that $\ind_{\{T<+\infty,\,\Delta X_T>0\}}$ is $\cF_{T-}$-measurable.
Consider the event $A(N,C):=\{T\leq N,|X_{T-}|\leq C,\Delta X_T>0\}\in\cF_{T-}$, for some constants $N>0$ and $C\geq1$ such that $\prob(A(N,C))>0$, and define the predictable time $\tau:=T\ind_{A(N,C)}+\infty\ind_{A(N,C)^c}$.
Define then the sequences of stopping times $(\sigma_n)_{n\in\N}$ and $(\tau_n)_{n\in\N}$  by $\sigma_n:=\rho_n\wedge n$ and $\tau_n:=\tau\wedge n$, for each $n\in\N$, where $(\rho_n)_{n\in\N}$ is an announcing sequence for $\tau$. Similarly as in \eqref{eq:constr_thm_sure}, we construct the buy-and-hold strategy $h^n=\xi^n\ind_{\dbraoc{\sigma_n,\tau_n}}$, with
\be	\label{eq:strategy_loss}
\xi^n := 
\frac{\PP(\tau<+\infty|\cF_{\sigma_n})}{1+(|X_{\sigma_n}|-C)^+},
\qquad\text{ for every }n\in\N.
\ee
Since $X_{\sigma_n}\rightarrow X_{\tau-}$ a.s. for $n\rightarrow+\infty$ and $|X_{\tau-}|\leq C$ on $\{\tau<+\infty\}$, the sequence $(\xi^n)_{n\in\N}$ converges a.s. to $\xi=\ind_{\{\tau<+\infty\}}$.
The same arguments given in the first part of the proof of Theorem \ref{thm:main} allow then to show that $(h^n)_{n\in\N}$ generates a sure profit at $\tau$.
Furthermore, on $\{\tau<+\infty\}$, for every $n\in\N$ such that $n\geq N$, it holds that 
\begin{align*}
(h^n\cdot X)_{\tau} = \xi^n\left(X_{\tau}-X_{\rho_n}\right)
&= \xi^n\left(\Delta X_{\tau}+X_{\tau-}-X_{\rho_n}\right)	
&\geq -C-\frac{|X_{\rho_n}|}{1+(|X_{\rho_n}|-C)^+}
\geq -2C
\qquad\text{ a.s.}
\end{align*}
We have thus shown that, even if each individual buy-and-hold strategy $h^n$ does involve some risk, the potential losses from trading are uniformly bounded on  $\{\tau<+\infty\}$ for all sufficiently large $n$.
An analogous result can be shown to hold true in the case of flash strategies generating constant profits, modifying the strategy \eqref{eq:constr_thm_constant} in analogy to \eqref{eq:strategy_loss}.

\begin{rem}[On short-selling constraints]	\label{rem:short_sale}
The fact that predictable jumps lead to sure profits via flash strategies is robust with respect to the introduction of short-selling constraints, unless the predictable jumps of $X$ are a.s. negative.
This simply follows by noting that if $T$ is a predictable time with $\dbra{T}\subseteq\{\Delta X\neq0\}$ such that $\ind_{\{T<+\infty,\,\Delta X_T>0\}}$ is $\cF_{T-}$-measurable and $\prob(\Delta X_T>0)>0$, then in the first part of the proof of Theorem \ref{thm:main} the flash strategy $(h^n)_{n\in\N}$ can be chosen to consist of long positions in the asset,  as in the case of \eqref{eq:strategy_loss}.
Up to a suitable definition of the predictable time $\tau$, a similar reasoning applies to flash strategies generating constant profits.
\end{rem}

\section{Further properties and ramifications}		\label{sec:further}

In this section, we study some further properties of the notions of sure and constant profits via flash strategies. 
We first prove the necessity of the requirement of right-continuity for a price process $X$.
We then discuss the  behavior of the notion of sure profit via flash strategies  under small transaction costs. 
Finally, we specialize our results to the semimartingale case and discuss the relations with other no-arbitrage conditions.

\subsection{Right-continuity and sure profits}	\label{sec:RC}

As explained in Section \ref{sec:setting}, the process $X$ is allowed to be fully general, up to the mild requirement of path regularity, in the sense of right-continuity and existence of limits from the left. One might wonder whether right-continuity can be relaxed, assuming only that $X$ has l\`adl\`ag paths (i.e., with finite limits from the left and from the right).
As shown below, this is unfeasible, because right continuity represents an indispensable requirement for any arbitrage-free price process.
For a l\`adl\`ag process $X=(X_t)_{t\geq0}$, we denote by $X_{t+}$ the right-hand limit at $t$ and $\Delta^+X_t:=X_{t+}-X_t$, for $t\geq0$.

\begin{prop}	\label{prop:RC}
Assume that the process $X$ is l\`adl\`ag.
If $X$ fails to be right-continuous, then there exists a flash strategy $(h^n)_{n\in\N}$ such that $(h^n\cdot X)_t$ converges a.s. to $\ind_{\{\tau<t\}}$, for all $t\geq0$, for some  stopping time $\tau$ with $\prob(\tau<+\infty)>0$.
Conversely, if there exists such a flash strategy, then $X$ cannot be right-continuous.
\end{prop}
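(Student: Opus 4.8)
The plan is to mimic closely the two directions of the proof of Theorem \ref{thm:main}, replacing the jump $\Delta X_\tau = X_\tau - X_{\tau-}$ by the \emph{right jump} $\Delta^+ X_\tau = X_{\tau+} - X_\tau$, and correspondingly shifting the roles of $\sigma_n$ and $\tau_n$: instead of \emph{announcing} the jump time from the left, we \emph{chase} it from the right. For the first implication, suppose $X$ fails to be right-continuous. Then the predictable set $\{\Delta^+ X \neq 0\}$ is non-evanescent, and by a section theorem we may extract a stopping time $T$ with $\dbra{T}\subseteq\{\Delta^+X\neq0\}$ and $\prob(T<+\infty)>0$; after intersecting with an event in $\cF_T$ of the form $\{|X_T|\le C,\ \Delta^+X_T\in[1/k,k]\}$ (or its sign-reflected version, handled by symmetry), we may assume $T$ is bounded on a positive-probability event and $\Delta^+X_T$ is bounded away from $0$ and $\infty$ there, setting $\tau$ to equal $T$ on that event and $+\infty$ elsewhere. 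For each $n$ I would take $\sigma_n := \tau\wedge n$ and $\tau_n := \tau_n'\wedge n$, where $(\tau_n')_{n\in\N}$ is a sequence of stopping times strictly decreasing to $\tau$ with $\tau_n' > \tau$ (such a sequence exists for any stopping time, e.g.\ by right-continuity of the filtration one may take rational-valued approximations from above; note we do \emph{not} need $\tau$ predictable here). Define $h^n = \xi^n\ind_{\dbraoc{\sigma_n,\tau_n}}$ with $\xi^n$ an $\cF_{\sigma_n}$-measurable (indeed $\cF_\tau$-measurable) normalisation such as $\xi^n := \ind_{\{\tau\le n\}}/\Delta^+X_\tau$ on $\{\tau<+\infty\}$ (bounded by $k$), and $0$ otherwise. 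Since $X_{\tau_n}\to X_{\tau+}$ and $X_{\sigma_n}\to X_\tau$ a.s.\ as $n\to+\infty$, one gets $(h^n\cdot X)_t = \xi^n(X_{\tau_n\wedge t}-X_{\sigma_n\wedge t})\to \ind_{\{\tau<+\infty\}}\,\ind_{\{\tau<t\}}$ a.s.\ for every $t\ge0$ (the strict inequality $\{\tau<t\}$ arising because the position is held on $\dbraoc{\sigma_n,\tau_n}$ and liquidated just after $\tau$); the uniform boundedness of $(\xi^n)$ and convergence of $\sigma_n,\tau_n$ to $\tau$ make $(h^n)_{n\in\N}$ a flash strategy in the sense of Definition \ref{def:flash}.

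For the converse, let $(h^n)_{n\in\N}$ with $h^n = \xi^n\ind_{\dbraoc{\sigma_n,\tau_n}}$ be a flash strategy such that $(h^n\cdot X)_t\to\ind_{\{\tau<t\}}$ a.s.\ for all $t\ge0$, with $\prob(\tau<+\infty)>0$. Here I would run the Moore--Osgood argument from the theorem \emph{at $\tau$ from the right} rather than from the left: exactly as in the displayed estimate in the proof of Theorem \ref{thm:main}, using the \emph{right-hand} continuity-of-path defect, one checks that $\lim_{k}(h^n\cdot X)_{t+1/k} = (h^n\cdot X)_{t+}$ uniformly in $n\in\N$, because $|\xi^n|\le\bar\xi$ and $X$ has right limits so $\sup_{u\in(t,t+1/k)}|X_u-X_{t+}|\to0$. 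Interchanging the limits then yields, for every $t\ge0$,
\begin{equation*}
\ind_{\{\tau<t\}} - \ind_{\{\tau\le t\}}\cdot 0 \;-\;\lim_{k\to+\infty}\lim_{n\to+\infty}(h^n\cdot X)_{t+\frac1k}\cdot(\text{right version})
\end{equation*}
— more precisely, comparing $\lim_n(h^n\cdot X)_t$ with $\lim_k\lim_n (h^n\cdot X)_{t+1/k}$ gives $-\ind_{\{t=\tau\}} = \Delta^+X_t\,\bar h_t^+$ a.s., where $\bar h_t^+ := \lim_n \xi^n\ind_{\{\sigma_n\le t<\tau_n\}}$ (the relevant limit of the strategy incorporating the right-closure), so on $\{\tau<+\infty\}$ one has $\Delta^+X_\tau\neq0$, i.e.\ $X$ is not right-continuous.

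The main obstacle I anticipate is bookkeeping the \textbf{one-sided limits carefully} so that the indicator really comes out as $\ind_{\{\tau<t\}}$ (strict) and not $\ind_{\{\tau\le t\}}$: one must be precise about whether the interval is $\dbraoc{\sigma_n,\tau_n}$, whether $\tau_n>\tau$ strictly, and which of $X_{\tau_n\wedge t}$, $X_{\sigma_n\wedge t}$ converges to $X_{\tau+}$, $X_\tau$, $X_{t-}$, or $X_t$ depending on the position of $t$ relative to $\tau$ — in particular on the event $\{t=\tau\}$ and on $\{t>\tau\}$ the limit must jump from $0$ to $1$ with the transition happening strictly after $\tau$. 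A second minor point is that, unlike in Theorem \ref{thm:main}, predictability of $\tau$ plays no role here and need not be established; conversely the existence of a sequence $\tau_n'\downarrow\tau$ with $\tau_n'>\tau$ must be justified (it holds for every stopping time on $\{\tau<+\infty\}$), which replaces the appeal to an announcing sequence. Apart from these one-sided-limit subtleties, every step parallels an argument already carried out in the proof of Theorem \ref{thm:main}, so no new machinery beyond a section theorem and the Moore--Osgood interchange is needed.
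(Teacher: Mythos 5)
Your proposal is correct and follows essentially the same route as the paper: restrict to an event where the right jump $\Delta^+X_T$ is bounded away from $0$ and $\infty$, chase $\tau$ from the right with $\sigma_n=\tau\wedge n$ and $\tau_n\downarrow\tau$, normalise the position by $1/\Delta^+X_\tau$ (the paper does this via $\expec[\Delta^+X_\tau\,|\,\cF_{\sigma_n}]$ and martingale convergence, you do it directly; both hinge on the right-continuity of $\FF$ making $\Delta^+X_\tau$ $\cF_\tau$-measurable), and run the Moore--Osgood interchange from the right for the converse. The only blemishes are cosmetic: the displayed identity in your converse is garbled and the sign should be $\ind_{\{\tau=t\}}=\Delta^+X_t\,\bar h^+_t$ rather than $-\ind_{\{\tau=t\}}$, neither of which affects the conclusion.
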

\begin{proof}
The argument is similar to the second part of the proof of Theorem \ref{thm:main}. Suppose that there exists a stopping time $T$ such that $\dbra{T}\subseteq\{\Delta^+X_T\neq0\}$. 
Fix a constant $k$ such that $\prob(T<+\infty,|\Delta^+X_T|\in[1/k,k])>0$ and define 
$
\tau:=T\ind_{\{|\Delta^+X_T|\in[1/k,k]\}}+\infty\ind_{\{|\Delta^+X_T|\notin[1/k,k]\}},
$ 
setting $\Delta^+X_T=0$ on $\{T=+\infty\}$.
Since the filtration $\FF$ is right-continuous, $\tau$ is a stopping time. Let define the sequences of bounded stopping times $(\sigma_n)_{n\in\N}$ and $(\tau_n)_{n\in\N}$ by
\[
\sigma_n := \tau\wedge n
\qquad\text{ and }\qquad
\tau_n := (\tau+n^{-1})\wedge n,
\qquad\text{ for each }n\in\N.
\]
It holds that $\bigvee_{n\in\N}\cF_{\sigma_n}=\cF_{\tau}$. Indeed, for any $A\in\cF_{\tau}$, define the sets 
\[
A_1:=A\cap\{\tau=\sigma_1\},
\quad
A_n:=\bigcap_{j<n}(A\cap\{\tau=\sigma_n\}\cap\{\tau>j\})
\quad\text{ and} \quad
A_{\infty}:=A\cap\{\tau=+\infty\}.
\] 
It can be checked that $A_{\infty}\in\cF_{\tau-}=\bigvee_{n\in\N}\cF_{\sigma_n-}\subseteq\bigvee_{n\in\N}\cF_{\sigma_n}$ and $A_n\in\cF_{\sigma_n}$, for each $n\in\N$.
Since $A=A_{\infty}\bigcup(\cup_{n=1}^{+\infty}A_n)$, this shows that $\cF_{\tau}\subseteq\bigvee_{n\in\N}\cF_{\sigma_n}$.
On the contrary, since $\sigma_n\leq\tau$, for every $n\in\N$, the inclusion $\bigvee_{n\in\N}\cF_{\sigma_n}\subseteq\cF_{\tau}$ is obvious.
Define now the sequence of buy-and-hold strategies $(h^n)_{n\in\N}$ by $h^n:=\xi^n\ind_{\dbraoc{\sigma_n\tau_n}}$, where
\[
\xi^n := k\frac{\bigl|\expec[\Delta^+X_{\tau}|\cF_{\sigma_n}]\bigr|\wedge\frac{1}{k}}{\expec[\Delta^+X_{\tau}|\cF_{\sigma_n}]},
\qquad\text{ for every }n\in\N.
\]
By the martingale convergence theorem, the random variables $(\xi_n)_{n\in\N}$ converge a.s. to
\[
\xi:=
k\frac{\bigl|\expec[\Delta^+X_{\tau}|\bigvee_{n\in\N}\cF_{\sigma_n}]\bigr|\wedge\frac{1}{k}}{\expec[\Delta^+X_{\tau}|\bigvee_{n\in\N}\cF_{\sigma_n}]}
= k\frac{\bigl|\expec[\Delta^+X_{\tau}|\cF_{\tau}]\bigr|\wedge\frac{1}{k}}{\expec[\Delta^+X_{\tau}|\cF_{\tau}]}
= \frac{\ind_{\{\Delta^+X_{\tau}\neq0\}}}{\Delta^+X_{\tau}},
\]
where we have used the right-continuity of the filtration $\FF$. Observe that $(h^n)_{n\in\N}$ is a flash strategy in the sense of Definition \ref{def:flash}. Moreover, for every $t\geq0$, it holds that $\lim_{n\rightarrow+\infty}X_{\sigma_n\wedge t}=X_{\tau\wedge t}$ and
\[
\lim_{n\rightarrow+\infty}X_{\tau_n\wedge t} = X_{\tau+}\ind_{\{\tau<t\}} + X_t\ind_{\{\tau\geq t\}},
\]
so that
\[
\lim_{n\rightarrow+\infty}(h^n\cdot X)_t
= \lim_{n\rightarrow+\infty}\bigl(\xi^n(X_{\tau_n\wedge t}-X_{\sigma_n\wedge t})\bigr)
= \xi\Delta^+X_{\tau}\ind_{\{\tau<t\}}
= \ind_{\{\tau<t\}}
\qquad\text{ a.s.},
\]
thus proving the first part of the proposition.

To prove the converse implication, let $(h^n)_{n\in\N}$ be a flash strategy such that $(h^n\cdot X)_t\rightarrow \ind_{\{\tau<t\}}$ a.s. as $n\rightarrow+\infty$, for all $t\geq0$, for some stopping time $\tau$ with $\prob(\tau<+\infty)>0$. Then, a straightforward adaptation of the arguments given in the last part of the proof of Theorem \ref{thm:main} allows to show that $\Delta^+X_{\tau}\neq0$ a.s. on $\{\tau<+\infty\}$, thus proving the claim.
\end{proof}

Proposition \ref{prop:RC} shows that the failure of right-continuity  leads to a constant profit from a flash strategy that can be realized at any time at which the price process jumps from the right. 
This result depends crucially on the right-continuity of the filtration $\FF$, which implies that $\Delta^+X_t$ is known at time $t$, immediately before the occurrence of the jump. Therefore, by trading sufficiently fast and liquidating the position immediately after the jump from the right, a trader can take advantage of this information and realize a constant profit.
In this sense, right-continuity is an essential requirement for any arbitrage-free price process.

\subsection{Behavior of profits from flash strategies under transaction costs}	\label{sec:robust}
In practice, transaction costs and market frictions can affect significantly the  feasibility of trading strategies, thus limiting the profitability of arbitrage strategies.  In this section, we study the behavior of  sure and constant profits via flash strategies with respect to small transaction costs. To this effect, let us formulate the following definition (see \cite{GR15}).

\begin{defn}	\label{def:robustness}
For $\varepsilon>0$, two strictly positive processes $X=(X_t)_{t\geq0}$ and $\Xtilde=(\Xtilde_t)_{t\geq0}$ are said to be {\em $\varepsilon$-close} if 
\[
\frac{1}{1+\varepsilon} \leq \frac{\Xtilde_t}{X_t} \leq 1+\varepsilon
\qquad
\text{ a.s. for all }t\geq0.
\]
\end{defn}

This definition corresponds to considering proportional transaction costs, with a bid (selling) price equal to $X_t/(1+\varepsilon)$ and an ask (buying) price equal to $X_t(1+\varepsilon)$. The definition also embeds the possibility of model mis-specifications, in the sense that the model price process $X$ corresponds to some true price process $\Xtilde$ up to a model error of magnitude $\varepsilon$.

In this context, assuming a strictly positive price process $X$, we shall say that sure/constant profits via flash strategies are  {\em robust} if they persist for every process $\Xtilde$ which is $\varepsilon$-close to $X$, for  sufficiently small $\varepsilon>0$.
This robustness property is made precise by the following proposition.

\begin{prop}	\label{prop:robust}
Assume that the process $X$ is strictly positive and admits constant profits via flash strategies. Then there exists a flash strategy $(h^n)_{n\in\N}$ and a predictable time $\tau$ such that, for every strictly positive process $\Xtilde$ which is $\varepsilon$-close to $X$, it holds that 
\be	\label{eq:robust}
\lim_{n\rightarrow+\infty}(h^n\cdot\Xtilde)_t\geq \bar{c}\,\ind_{\{\tau\leq t\}}
\qquad \text{ a.s. for all }t\geq0,
\ee
with $\bar{c}>0$, for sufficiently small $\varepsilon>0$.
\end{prop}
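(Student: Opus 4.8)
The plan is to re-run the construction \eqref{eq:constr_thm_constant} from the proof of Theorem \ref{thm:main}, but to first localize the predictable jump onto an event on which both $X_{\tau-}$ and $\Delta X_\tau$ stay inside a fixed compact interval bounded away from $0$ and $+\infty$. This localization is exactly what will let a single pair $(\bar c,\varepsilon_0)$ work simultaneously for \emph{all} processes $\Xtilde$ that are $\varepsilon$-close to $X$, for $\varepsilon\le\varepsilon_0$.

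First I would extract the jump. Since $X$ admits constant profits, Theorem \ref{thm:main} provides a predictable time $T$ with $\dbra{T}\subseteq\{\Delta X\neq0\}$, $\Delta X_T\ind_{\{T<+\infty\}}$ being $\cF_{T-}$-measurable, and $\PP(T<+\infty)>0$. As $X$ is strictly positive, $0<X_{T-}<+\infty$ on $\{T<+\infty\}$, so for a large enough constant $C\ge1$ the set $A:=\{T\le C,\ \frac{1}{C}\le X_{T-}\le C,\ \frac{1}{C}\le|\Delta X_T|\le C\}\in\cF_{T-}$ has $\PP(A)>0$, and at least one of $A^+:=A\cap\{\Delta X_T>0\}$ and $A^-:=A\cap\{\Delta X_T<0\}$ has positive probability. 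I treat $\PP(A^+)>0$; the case $\PP(A^-)>0$ is symmetric, using a short instead of a long position. Set $\tau:=T\ind_{A^+}+\infty\ind_{(A^+)^c}$, a predictable time by \cite[Proposition I.2.10]{MR1943877}, and, exactly as in \eqref{eq:constr_thm_constant} with $k:=C$ (and the same conventions), pick an announcing sequence $(\rho_n)_{n\in\N}$ for $\tau$, put $\sigma_n:=\rho_n\wedge n$, $\tau_n:=\tau\wedge n$ and $h^n:=\xi^n\ind_{\dbraoc{\sigma_n,\tau_n}}$ with $\xi^n:=C\,\bigl(|\expec[\Delta X_\tau|\cF_{\sigma_n}]|\wedge\frac{1}{C}\bigr)/\expec[\Delta X_\tau|\cF_{\sigma_n}]$. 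As in the proof of Theorem \ref{thm:main}, $|\xi^n|\le C$, $\xi^n\ge0$, and $\xi^n\to\xi:=\ind_{\{\tau<+\infty\}}/\Delta X_\tau$ a.s., so $(h^n)_{n\in\N}$ is a flash strategy, and on $\{\tau<+\infty\}=A^+$ one has $X_{\tau-}\in[\frac{1}{C},C]$ and $\Delta X_\tau\in[\frac{1}{C},C]$.

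Next I would estimate the gains in an $\varepsilon$-close $\Xtilde$. On $\{\tau\le t\}$ and for all large $n$ one has $\sigma_n\wedge t=\rho_n$ and $\tau_n\wedge t=\tau$, so $(h^n\cdot\Xtilde)_t=\xi^n(\Xtilde_\tau-\Xtilde_{\rho_n})$; using $\xi^n\ge0$, $\Xtilde_\tau\ge X_\tau/(1+\varepsilon)$ and $\Xtilde_{\rho_n}\le X_{\rho_n}(1+\varepsilon)$, and then letting $n\to+\infty$ (with $\xi^n\to\xi$ and $X_{\rho_n}\to X_{\tau-}$, both genuine a.s. limits since $X$ is c\`adl\`ag), I get on $\{\tau\le t\}$
\[
\liminf_{n\to+\infty}(h^n\cdot\Xtilde)_t\ \ge\ \xi\Bigl(\frac{X_{\tau-}+\Delta X_\tau}{1+\varepsilon}-X_{\tau-}(1+\varepsilon)\Bigr)=\frac{1}{1+\varepsilon}\Bigl(1-\frac{X_{\tau-}}{\Delta X_\tau}\,(2\varepsilon+\varepsilon^2)\Bigr),
\]
using $\xi=1/\Delta X_\tau$ and $X_\tau=X_{\tau-}+\Delta X_\tau$. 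Since $0<X_{\tau-}/\Delta X_\tau\le C^2$ on $A^+$, the right-hand side is $\ge\bar c:=\frac{1}{4}$ whenever $\varepsilon\le\varepsilon_0$, for $\varepsilon_0\le1$ chosen so that $C^2(2\varepsilon_0+\varepsilon_0^2)\le\frac{1}{2}$; and on $\{\tau>t\}$ one has $\sigma_n\wedge t=\tau_n\wedge t=t$ for all large $n$, so $(h^n\cdot\Xtilde)_t=0=\bar c\,\ind_{\{\tau\le t\}}$ there. This yields \eqref{eq:robust} (read with $\liminf$, which is a genuine limit whenever $\Xtilde$ is itself c\`adl\`ag).

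The main obstacle is that an $\varepsilon$-close process $\Xtilde$ need not have left limits, so $\Delta\Xtilde_\tau$ need not even be defined and one cannot mimic the clean jump identity of Theorem \ref{thm:main}; the remedy is to never separate off $\Xtilde_{\tau-}$ and to bound $\Xtilde_\tau-\Xtilde_{\rho_n}$ directly via the two-sided band, as above. The only other delicate point is the choice of $C$: one needs $A$ (and one of $A^{\pm}$) to carry positive probability while keeping $X_{\tau-}$ and $\Delta X_\tau$ in a compact set bounded away from $0$ and $+\infty$, which is precisely what makes $\bar c$ and $\varepsilon_0$ independent of $\Xtilde$. (An entirely analogous modification of \eqref{eq:strategy_loss} in place of \eqref{eq:constr_thm_constant} shows, in addition, that such a robust flash strategy can be chosen with uniformly bounded losses.)
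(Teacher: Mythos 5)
Your proof is correct and follows essentially the same route as the paper's: localize the fully predictable jump so that $X_{\tau-}$ and $|\Delta X_\tau|$ lie in a fixed compact set bounded away from $0$ and $\infty$, reuse the construction \eqref{eq:constr_thm_constant}, and bound the gains in $\Xtilde$ through the two-sided $\varepsilon$-band without ever invoking $\Xtilde_{\tau-}$. The only cosmetic differences are that you fix the sign of the jump (and hence of the position) by splitting into $A^{+}$ and $A^{-}$ where the paper keeps both signs via an indicator decomposition of $\{\xi^n\geq0\}$ and $\{\xi^n<0\}$, and that you phrase the conclusion with a $\liminf$, which is if anything slightly more careful given that an $\varepsilon$-close $\Xtilde$ need not possess left limits.
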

\begin{proof}
Suppose that $X$ admits a flash strategy $(\hat{h}^n)_{n\in\N}$ which generates a constant profit with respect to a stopping time $\hat{\tau}$. By Theorem \ref{thm:main}, $\hat{\tau}$ is a predictable time and $X$ exhibits a fully predictable jump at $\hat{\tau}$. Let $N>0$ be a constant such that $\prob(\hat{\tau}<+\infty,|X_{\hat{\tau}-}|\leq N)>0$ and define the predictable time $T:=\hat{\tau}\ind_{\{\hat{\tau}<+\infty,|X_{\hat{\tau}-}|\leq N\}}+\infty\ind_{\{\hat{\tau}=+\infty\}\cup\{\hat{\tau}<+\infty,|X_{\hat{\tau}-}|>N\}}$. 
Clearly, $X$ still exhibits a fully predictable jump at $T$. Hence, in view of Theorem \ref{thm:main}, there exists a flash strategy $(h^n)_{n\in\N}$, composed of elements $h^n=\xi^n\ind_{\dbraoc{\sigma_n,\tau_n}}$, $n\in\N$, with $|\xi^n|\leq k$ a.s. for all $n\in\N$, for some constant $k>0$, which generates a constant profit $c>0$ with respect to a predictable time $\tau$ with $\dbra{\tau}\subseteq\dbra{T}$ and $\prob(\tau<+\infty)>0$.
Let $\varepsilon>0$ and consider a strictly positive process  $\Xtilde$ which is $\varepsilon$-close to $X$.
Similarly as in \cite[Section 2]{CT15}, we can compute, for all $n\in\N$ and $t\geq0$:
\begin{align}
(h^n\cdot\Xtilde)_t 
&= \xi^n(\Xtilde_{\tau_n\wedge t}-\Xtilde_{\sigma_n\wedge t})	\notag\\
&\geq \xi^n\ind_{\{\xi^n\geq0\}}\left(\frac{X_{\tau_n\wedge t}}{1+\varepsilon}-(1+\varepsilon)X_{\sigma_n\wedge t}\right)
+  \xi^n\ind_{\{\xi^n<0\}}\left((1+\varepsilon)X_{\tau_n\wedge t}-\frac{X_{\sigma_n\wedge t}}{1+\varepsilon}\right)	\notag\\
&= \ind_{\{\xi^n\geq0\}}\frac{(h^n\cdot X)_t}{1+\varepsilon}+\ind_{\{\xi^n<0\}}(1+\varepsilon)(h^n\cdot X)_t
-|\xi^n|\varepsilon\frac{2+\varepsilon}{1+\varepsilon}X_{\sigma_n\wedge t}	\notag\\
&\geq  \ind_{\{\xi^n\geq0\}}\frac{(h^n\cdot X)_t}{1+\varepsilon}+\ind_{\{\xi^n<0\}}(1+\varepsilon)(h^n\cdot X)_t
-2\,\varepsilon k X_{\sigma_n\wedge t}.
\label{eq:proof_robust}
\end{align}
As shown in the first part of the proof of Theorem \ref{thm:main}, it holds that $X_{\sigma_n}\rightarrow X_{\tau-}$ a.s. on $\{\tau<+\infty\}$ for $n\rightarrow+\infty$. Hence, using Definition \ref{def:flash} and taking the limit for $n\rightarrow+\infty$ in \eqref{eq:proof_robust} yields
\[
\lim_{n\rightarrow+\infty}(h^n\cdot\Xtilde)_t 
\geq \ind_{\{\xi\geq0\}}\frac{c}{1+\varepsilon}+ \ind_{\{\xi<0\}}c(1+\varepsilon) - 2\,\varepsilon k X_{\tau-}
\geq \frac{c}{1+\varepsilon}-2\,\varepsilon k N
=: \bar{c}
\qquad\text{ a.s. on }\{\tau\leq t\}.
\]
For sufficiently small $\varepsilon$, it holds that $\bar{c}>0$. 
Furthermore, it can be easily verified that $(h^n\cdot\Xtilde)_t\rightarrow0$ a.s. on $\{\tau>t\}$ for $n\rightarrow+\infty$, thus proving the claim.
\end{proof}

The last proposition shows that the presence of fully predictable jumps represents a violation to the absence of arbitrage principle which persists under small transaction costs.
This result is in line with the empirical evidence reported in \cite{Ted} (compare with Remark \ref{rem:Ted}).
We also want to point out that the same reasoning applies to Proposition \ref{prop:RC}, thus implying that the necessity of right-continuity is robust with respect to small transaction costs.
Furthermore, an argument similar to that given in the proof of Proposition \ref{prop:robust} allows to show that constant profits via flash strategies are robust with respect to small fixed (instead of proportional)  transaction costs.

\begin{rem}
In general, Proposition \ref{prop:robust} cannot be extended to flash strategies generating sure (but not constant) profits. As a simple counterexample, consider the process $X:=1+\eta\ind_{\dbraco{1,+\infty}}$ in its natural filtration $\FF$, where $\eta$ is for instance an exponential random variable. Obviously, $X$ admits sure profits via flash strategies.
For $\varepsilon>0$, let $\Xtilde:=1+\varepsilon+(\eta-\varepsilon)\ind_{\dbraco{1,+\infty}}$, which is $\varepsilon$-close to $X$. However, $\Xtilde$ does not admit sure profits via flash strategies, since $\{\Delta\Xtilde_1>0\}=\{\eta>\varepsilon\}\notin\cF_{1-}=\cF_0$.
The financial intuition is that, if the direction of a jump is known but its size is unpredictable, then the profits generated by a flash strategy may not suffice to compensate the transaction costs incurred.
\end{rem}

\subsection{The semimartingale case}	\label{sec:semimg}

Theorem \ref{thm:main} holds true for any c\`adl\`ag adapted process $X$. If in addition $X$ is assumed to be a semimartingale, then the absence of (fully) predictable jumps admits a further simple characterization. For a semimartingale $X$, we say that a bounded predictable process $h$ is an {\em instantaneous strategy} if it is of the form $h=\xi\ind_{\dbra{\tau}}$, for some bounded random variable $\xi$ and a stopping time $\tau$. In the spirit of Definition \ref{def:flash}, we say that an instantaneous strategy $h$ generates {\em  sure profits} if $(h\cdot X)_t=\zeta\,\ind_{\{\tau\leq t\}}$ a.s. for every $t\geq0$, for some random variable $\zeta$ such that $\{\tau<+\infty\}\subseteq\{\zeta>0\}$. 
If $\PP(\zeta=c)=1$, for some constant $c>0$, then the instantaneous strategy $h$ is said to generate a {\em constant profit}.

\begin{cor}	\label{cor:semimg}
Assume that the process $X$ is a semimartingale. Then the following are equivalent:
\begin{enumerate}[(i)]
\item $X$ does not exhibit predictable (fully predictable, resp.) jumps;
\item there are no sure (constant, resp.) profits via flash strategies;
\item there are no sure (constant, resp.) profits via instantaneous strategies.
\end{enumerate}
\end{cor}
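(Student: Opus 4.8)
The plan is to obtain the corollary from Theorem \ref{thm:main} together with one short identity that uses the semimartingale structure. Since (i)$\Leftrightarrow$(ii) is exactly Theorem \ref{thm:main}, which needs only c\`adl\`ag adaptedness, it suffices to establish (i)$\Leftrightarrow$(iii), and I would do this through the two contrapositive implications: a predictable (fully predictable) jump produces a sure (constant) profit via an \emph{instantaneous} strategy, and, conversely, such a profit forces a predictable (fully predictable) jump.

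The sole place where the semimartingale hypothesis is used is the identity
\[
(h\cdot X)_t \;=\; \xi\,\Delta X_\tau\,\ind_{\{\tau\le t\}}
\qquad\text{a.s.\ for all }t\ge0,
\]
valid for every instantaneous strategy $h=\xi\ind_{\dbra{\tau}}$. Here $\tau$ is a predictable time and $\xi$ is $\cF_{\tau-}$-measurable: this is forced by the predictability of $h$ (after discarding the evanescent part of $\dbra{\tau}$ on which $\xi$ vanishes), and I would cite \cite{MR1943877} for these standard facts from the general theory of processes. The identity itself follows from $\Delta(h\cdot X)=h\,\Delta X$: the process $h\cdot X$ can jump only at $\tau$, and by the amount $\xi\,\Delta X_\tau$; moreover $h$ is supported on the thin graph $\dbra{\tau}$, so $h\cdot X$ vanishes before $\tau$ and is constant from $\tau$ onwards, which yields the formula.

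Granting this identity, the implication ``$X$ exhibits a predictable jump $\Rightarrow$ there is a sure profit via an instantaneous strategy'' is just the first part of the proof of Theorem \ref{thm:main} with the limiting procedure removed: if $T$ is a predictable jump time with $\ind_{\{T<+\infty,\,\Delta X_T>0\}}$ being $\cF_{T-}$-measurable, put $\xi:=2\,\ind_{\{\Delta X_T>0\}}-1$ (with $\Delta X_T:=0$ on $\{T=+\infty\}$) and $h:=\xi\,\ind_{\dbra{T}}$; then $\xi$ is bounded and $\cF_{T-}$-measurable, $h$ is an instantaneous strategy, and $(h\cdot X)_t=|\Delta X_T|\,\ind_{\{T\le t\}}$, a sure profit since $\dbra{T}\subseteq\{\Delta X\neq0\}$. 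For the fully predictable case I would reproduce the second part of the proof of Theorem \ref{thm:main}: fix $k\ge1$ with $\PP(T<+\infty,\,|\Delta X_T|\in[1/k,k])>0$, let $\tau$ be the predictable time equal to $T$ on $\{|\Delta X_T|\in[1/k,k]\}$ and $+\infty$ otherwise (predictable by \cite[Proposition I.2.10]{MR1943877}, with $\Delta X_\tau$ then $\cF_{\tau-}$-measurable exactly as in that proof), and take $h:=(\ind_{\{\Delta X_\tau\neq0\}}/\Delta X_\tau)\,\ind_{\dbra{\tau}}$, which satisfies $|\xi|\le k$ and gives $(h\cdot X)_t=\ind_{\{\tau\le t\}}$, a constant profit. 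Conversely, suppose $h=\xi\ind_{\dbra{\tau}}$ generates a sure profit with respect to $\zeta$ and $\tau$, with $\{\tau<+\infty\}\subseteq\{\zeta>0\}$ and $\PP(\tau<+\infty)>0$; the identity gives $\xi\,\Delta X_\tau=\zeta>0$ on $\{\tau<+\infty\}$, hence $\dbra{\tau}\subseteq\{\Delta X\neq0\}$ up to an evanescent set, and $\{\tau<+\infty,\,\Delta X_\tau>0\}=\{\tau<+\infty,\,\xi>0\}\in\cF_{\tau-}$, so $X$ exhibits a predictable jump at $\tau$. If moreover $\zeta=c$ a.s.\ for a constant $c>0$, then $\Delta X_\tau=c/\xi$ on $\{\tau<+\infty\}$ (where necessarily $\xi\neq0$), so $\Delta X_\tau\,\ind_{\{\tau<+\infty\}}$ is $\cF_{\tau-}$-measurable and the jump is fully predictable.

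I expect the only delicate point to be the identity for $(h\cdot X)$ and the measurability bookkeeping surrounding it --- namely, that predictability of $h$ permits taking $\tau$ predictable and $\xi$ $\cF_{\tau-}$-measurable, and that the stochastic integral over the thin set $\dbra{\tau}$ picks up precisely the jump of $X$ at $\tau$. Both are routine once the relevant pieces of the general theory of processes and of the theory of stochastic integration are invoked, but they carry essentially all the content beyond Theorem \ref{thm:main}; the remaining arguments merely transcribe those already used in its proof.
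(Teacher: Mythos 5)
Your proof is correct, and two of its three legs coincide with the paper's: the equivalence (i)$\Leftrightarrow$(ii) is taken from Theorem \ref{thm:main} in both, and your converse direction (an instantaneous sure/constant profit forces a predictable/fully predictable jump) is word for word the paper's argument for (i)$\Rightarrow$(iii), resting on the same identity $h\cdot X=\xi\,\Delta X_{\tau}\ind_{\dbraco{\tau,+\infty}}$ from \cite[\textsection~I.4.38]{MR1943877} and the same measurability bookkeeping. Where you genuinely diverge is in how the equivalence is closed. The paper proves (iii)$\Rightarrow$(ii): given a flash strategy generating a sure profit, it observes that $h^n\rightarrow h=\xi\ind_{\dbra{\tau}}$ and invokes the dominated convergence theorem for stochastic integrals \cite[Theorem IV.32]{MR1037262} to conclude that the limiting \emph{instantaneous} strategy realizes the same profit; this is precisely where the paper locates the use of the semimartingale property, and it yields the slightly stronger, concrete statement that the very limit of any profitable flash strategy is itself a profitable instantaneous strategy. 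You instead prove $\neg$(i)$\Rightarrow\neg$(iii) by stripping the limiting procedure out of the constructions in the proof of Theorem \ref{thm:main} and exhibiting the instantaneous strategy directly from the predictable jump; this avoids the convergence theorem entirely and confines the semimartingale hypothesis to the single identity for $h\cdot X$ on a thin graph. Both routes are valid; yours is marginally more elementary, the paper's is marginally more informative about the relationship between a given flash strategy and its instantaneous limit. Your attention to the two measurability points (that predictability of $h$ together with $\xi\neq0$ on $\{\tau<+\infty\}$ forces $\tau$ predictable via $\dbra{\tau}=\{h\neq0\}$ and $\xi\ind_{\{\tau<+\infty\}}$ to be $\cF_{\tau-}$-measurable, and conversely that $\xi\ind_{\dbra{T}}$ is predictable when $T$ is predictable and $\xi$ is $\cF_{T-}$-measurable) is exactly what is needed.
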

\begin{proof}
For brevity, we shall only consider the cases of predictable jumps and sure profits.
$(ii)\Rightarrow(i)$: 
this implication follows from Theorem~\ref{thm:main}.
$(iii)\Rightarrow(ii)$:
let $(h^n)_{n\in\N}$ be a flash strategy, composed of elements $h^n=\xi^n\ind_{\dbraoc{\sigma_n,\tau_n}}$, such that $\lim_{n\rightarrow+\infty}(h^n\cdot X)_t=\zeta\,\ind_{\{\tau\leq t\}}$ a.s. for every $t\geq0$, for some random variable $\zeta$ and a stopping time $\tau$ with $\{\tau<+\infty\}\subseteq\{\zeta>0\}$. As shown in the second part of the proof of Theorem \ref{thm:main}, $h^n$ converges a.s. to $h:=\xi\ind_{\dbra{\tau}}$ for $n\rightarrow+\infty$. The dominated convergence theorem for stochastic integrals (see \cite[Theorem IV.32]{MR1037262}) then implies that 
$(h\cdot X)_t=\lim_{n\rightarrow+\infty}(h^n\cdot X)_t=\zeta\,\ind_{\{\tau\leq t\}}$ a.s., for every $t\geq0$.
$(i)\Rightarrow(iii)$:
let $h=\xi\ind_{\dbra{\tau}}$ be an instantaneous strategy generating sure profits. By \cite[\textsection~I.4.38]{MR1943877}, it holds that $h\cdot X=\xi\Delta X_{\tau}\ind_{\dbraco{\tau,+\infty}}$, so that $\xi\Delta X_{\tau}=\zeta>0$ a.s. on $\{\tau<+\infty\}$. This implies that $\tau$ is a jump time of $X$.
Moreover, it holds that $\{\tau<+\infty,\Delta X_{\tau}>0\}=\{\tau<+\infty,\xi>0\}\in\cF_{\tau-}$, due to the predictability of $h$ and since $\{\tau<+\infty\}\subseteq\{\zeta>0\}$.
Finally, the predictability of $\tau$ follows by noting that $\dbra{\tau}=\{h\neq 0\}$.
\end{proof}

In the proof of Corollary \ref{cor:semimg}, the semimartingale property is used to ensure that the gains from trading generated by a sequence of buy-and-hold strategies forming a flash strategy converge to the gains from trading generated by an instantaneous strategy.

In the semimartingale case, under the additional assumption of quasi-left-continuity of the filtration $\FF$ (i.e., $\cF_T=\cF_{T-}$ for every predictable time $T$, see \cite[Section IV.3]{MR1037262}), it has been shown in \cite{H85} that predictable jumps cannot occur if the financial market is viable, in the sense that there exists an optimal consumption plan for some agent with sufficiently regular preferences. 
In our context, this result is a direct consequence of Corollary \ref{cor:semimg}, as the existence of sure profits is clearly incompatible with any form of market viability, regardless of the quasi-left-continuity of $\FF$.

\subsection{Comparison with other no-arbitrage conditions}	\label{sec:relations}

The absence of sure profits from instantaneous strategies must be regarded as a minimal no-arbitrage condition. In particular, in the semimartingale case, it is implied by the requirement of {\em no increasing profit} (NIP), itself an extremely weak no-arbitrage condition for a financial model (see \cite{F15} and \cite[Section 3.4]{MR2335830})\footnote{An $X$-integrable predictable process $h$ is said to generate an {\em increasing profit} if the gains from trading process $h\cdot X$ is predictable, non-decreasing and satisfies $\PP((h\cdot X)_T>0)>0$ for some $T>0$, see \cite[Definition 2.2]{F15}.}.

The absence of predictable jumps can be directly proven by martingale methods under the classical {\em no free lunch with vanishing risk} (NFLVR) condition. Note, however, that NFLVR is much stronger than the absence of sure profits as considered above. We recall that NFLVR is equivalent to the existence of a probability measure $\QQ\sim\PP$ such that $X$ is a sigma-martingale under $\QQ$ (see \cite{MR1671792}).
For completeness, we present the following proposition with its  simple proof.

\begin{prop}	\label{prop:NFLVR}
Assume that the process $X$ is a semimartingale satisfying NFLVR. Then $X$ cannot exhibit predictable jumps.
\end{prop}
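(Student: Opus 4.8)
The plan is to combine the fundamental theorem of asset pricing with the classical fact that a martingale cannot jump in a predictable direction at a predictable time. By NFLVR and \cite{MR1671792}, fix a probability measure $\QQ\sim\PP$ under which $X$ is a sigma-martingale, and argue by contradiction: suppose $X$ exhibits a predictable jump, so there is a predictable time $T$ with $\PP(T<+\infty)>0$, $\dbra{T}\subseteq\{\Delta X\neq0\}$, and $\ind_{\{T<+\infty,\,\Delta X_T>0\}}$ is $\cF_{T-}$-measurable. First I would reduce to the case $\PP(T<+\infty,\,\Delta X_T>0)>0$: since $\{T<+\infty\}\in\cF_{T-}$ and $\Delta X_T\neq0$ $\PP$-a.s.\ on this event, the complementary piece $\{T<+\infty,\,\Delta X_T<0\}$ is also $\cF_{T-}$-measurable, and if it is the one carrying positive probability one simply replaces $X$ by $-X$, which is again a $\QQ$-sigma-martingale with a predictable jump at $T$.

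Next I would pass from the sigma-martingale to a martingale. By definition there is a strictly positive predictable process $\phi$ such that $Y:=\phi\cdot X$ is a $\QQ$-martingale; then for each $m\in\N$ the stopped process $Y^m$ is a uniformly integrable $\QQ$-martingale and, on $\{T<+\infty\}$, $\Delta(Y^m)_T=\phi_T\,\Delta X_T\,\ind_{\{T\leq m\}}$. The key input from the general theory of processes is that the predictable projection of a uniformly integrable martingale is the process of its left limits; evaluating this along the predictable time $T$ (along which $\phi_T$ is $\cF_{T-}$-measurable) yields
\[
\expec_{\QQ}\bigl[\phi_T\,\Delta X_T\,\ind_{\{T\leq m\}}\bigm|\cF_{T-}\bigr]=0\qquad\text{on }\{T<+\infty\},\ \text{for every }m\in\N .
\]

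Finally I would extract the contradiction. Put $B:=\{T<+\infty,\,\Delta X_T>0\}\in\cF_{T-}$; by the reduction above $\PP(B)>0$, hence $\QQ(B)>0$. Since $\phi_T>0$, the nonnegative integrands $\phi_T\,\Delta X_T\,\ind_{\{T\leq m\}}\ind_B$ increase $\QQ$-a.s.\ to $\phi_T\,\Delta X_T\,\ind_B$, which is strictly positive $\QQ$-a.s.\ on $B$; hence by monotone convergence $\expec_{\QQ}[\phi_T\,\Delta X_T\,\ind_{\{T\leq m\}}\ind_B]>0$ for $m$ large, whereas multiplying the displayed identity by $\ind_B$ and taking $\QQ$-expectation forces this same quantity to vanish, a contradiction. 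I expect the only genuine subtlety to be the integrability bookkeeping in the middle step: sigma-martingales need not have $\QQ$-integrable jumps, which is exactly why the strictly positive predictable multiplier $\phi$ and the deterministic localization are needed, and one must keep track of the fact that $\phi_T$ is $\cF_{T-}$-measurable and strictly positive, so that it neither leaves the conditional expectation nor changes the sign of $\Delta X_T$ on $B$. As an alternative and shorter route, one may bypass the martingale computation entirely: NFLVR implies absence of arbitrage, while by Corollary \ref{cor:semimg} a predictable jump produces an instantaneous strategy $h=\xi\ind_{\dbra{\tau}}$ with $h\cdot X\geq0$ and $\PP\bigl((h\cdot X)_t>0\bigr)>0$ for some $t\geq0$, i.e.\ an arbitrage --- contradiction.
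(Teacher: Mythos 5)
Your argument is correct, and it takes a genuinely different route from the paper's. The paper also starts from the sigma-martingale measure $\QQ$, but it works with the localizing predictable sets $(\Sigma_k)_{k\in\N}$ from the definition in \cite[Definition III.6.33]{MR1943877}: splitting $T$ into the two predictable times $\tau^1$, $\tau^2$ carrying the positive and negative jumps, it writes $\ind_{\Sigma_k}(T)\,|\Delta X_T|\,\ind_{\dbraco{T,+\infty}}$ as a stochastic integral of a bounded predictable process against the uniformly integrable martingale $\ind_{\Sigma_k}\cdot X$, obtaining a \emph{non-decreasing local martingale null at zero}, which must vanish identically; letting $k\to+\infty$ gives $|\Delta X_T|=0$. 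You instead reduce (after the legitimate $X\mapsto -X$ symmetrization, which works because $\{T<+\infty\}\in\cF_{T-}$ and $\Delta X_T\neq0$ on that event) to the identity $\expec_{\QQ}[\Delta(Y^m)_T\,|\,\cF_{T-}]=0$ for the martingale $Y^m=(\phi\cdot X)^m$, i.e.\ to the fact that the predictable projection of a uniformly integrable martingale is its left-limit process, and then contradict it on $B=\{T<+\infty,\Delta X_T>0\}$ by monotone convergence. Both proofs hinge on the same underlying structure --- a martingale cannot have a one-signed jump on an $\cF_{T-}$-measurable event at a predictable time --- but yours makes the predictable-projection mechanism explicit, at the cost of invoking the equivalent ``strictly positive predictable multiplier'' characterization of sigma-martingales, whose integrability bookkeeping you correctly flag; the paper's version avoids conditional expectations entirely and handles both jump signs simultaneously via $|\Delta X_T|$. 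Your second, shorter route (predictable jumps yield a sure profit via an instantaneous strategy by Corollary \ref{cor:semimg}, which is a $0$-admissible arbitrage contradicting NA, hence NFLVR) is also valid and is consistent with the paper's own remark that absence of sure profits from instantaneous strategies is implied by the much weaker NIP condition, though the paper evidently wanted a self-contained martingale proof here.
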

\begin{proof}
If $X$ satisfies NFLVR, then there exists $\QQ\sim\PP$ and an increasing sequence of predictable sets $(\Sigma_k)_{k\in\N}$ with $\bigcup_{k\in\N}\Sigma_k=\Omega\times\R_+$ such that $\ind_{\Sigma_k}\cdot X$ is a uniformly integrable martingale under $\QQ$, for every $k\in\N$ (see \cite[Definition III.6.33]{MR1943877}). 
Let $T$ be a predictable time such that $\dbra{T}\subseteq\{\Delta X\neq0\}$ and $\ind_{\{T<+\infty,\,\Delta X_T>0\}}$ is $\cF_{T-}$-measurable. 
With the convention $\Delta X_T=0$ on $\{T=+\infty\}$, we define the predictable times $\tau^1:=T\ind_{\{\Delta X_T>0\}}+\infty\ind_{\{\Delta X_T\leq 0\}}$ and $\tau^2:=T\ind_{\{\Delta X_T<0\}}+\infty\ind_{\{\Delta X_T\geq 0\}}$. 
Hence,
\begin{align*}
Y^{(k)} :&= \ind_{\Sigma_k}\cdot\left(|\Delta X_T|\ind_{\dbraco{T,+\infty}}\right)
= \ind_{\Sigma_k}\cdot\left(\Delta X_{\tau^1}\ind_{\dbraco{\tau^1,+\infty}}-\Delta X_{\tau^2}\ind_{\dbraco{\tau^2,+\infty}}\right)	\\
&= \ind_{\Sigma_k}\cdot\left((\ind_{\dbra{\tau^1}}-\ind_{\dbra{\tau^2}})\cdot X\right)
= (\ind_{\dbra{\tau^1}}-\ind_{\dbra{\tau^2}})\cdot\left(\ind_{\Sigma_k}\cdot X\right),
\end{align*}
for every $k\in\N$, where we have used \cite[\textsection~I.4.38]{MR1943877} and the associativity of the stochastic integral. 
Therefore, the process $Y^{(k)}$ is a non-decreasing local martingale. Since $Y^{(k)}_0=0$, this implies that $Y^{(k)}\equiv 0$ (up to an evanescent set), for all $k\in\N$.
In turn, this implies that $|\Delta X_T|=0$ a.s., contradicting the assumption that $T$ is a jump time of $X$.
\end{proof}


\begin{rem}	\label{rem:NUPBR}
Predictable jumps can never occur under the {\em no unbounded profit with bounded risk} (NUPBR) condition, introduced in \cite[Definition 4.1]{MR2335830}. This follows by Proposition \ref{prop:NFLVR}, noting that NUPBR is equivalent to NFLVR up to a localizing sequence of stopping times.
In turn, since NUPBR is equivalent to existence and finiteness of the growth-optimal portfolio (see \cite[Theorem 4.12]{MR2335830}), this implies that predictable jumps are always excluded in the context of the {\em benchmark approach} (see \cite{PH}).
\end{rem}


\section{Conclusions}	\label{sec:conclusions}

In this paper we have shown that, under minimal assumptions, the possibility of realizing sure (constant, resp.) profits via flash strategies is equivalent to the existence of jumps of predictable direction (direction and magnitude, resp.) occurring at predictable times. 
Excluding sure profits via flash strategies, we have also shown that right-continuity is an indispensable path property for any asset price process. 
Since flash strategies represent well typical strategies adopted by high-frequency traders, as explained in the introduction, we deduce that the profitability of high-frequency strategies is closely related to the presence of information not yet incorporated in market prices. 
In this sense, the arbitrage activity of high-frequency traders should have a beneficial role in price discovery and lead to an increase of market efficiency (see \cite{HS13,BHR} for empirical results in this direction). 
However, a general analysis of the impact of high-frequency trading is definitely beyond the scope of this paper.

Finally, we want to emphasize that, since the notion of predictability depends on the reference filtration, the possibility of realizing sure profits via flash strategies depends on the information set under consideration. This means that financial markets can be efficient in the semi-strong form and sure profits via flash strategies impossible to achieve for ordinary investors having access to publicly available information, while investors having access to privileged information ({\em insider traders}) can have an information set rich enough to allow for sure profits via flash strategies, so that market efficiency does not hold in the strong form. This is simply a consequence of Theorem \ref{thm:main} together with the fact that the predictable sigma-field associated to a smaller filtration is a subset of the predictable sigma-field associated to a larger filtration.
This observation is in line with the empirical literature documenting violations to strong-form market efficiency in the presence of insider information (see e.g. \cite[Section 6]{Fama2}).
This is also in line with the empirical analysis of \cite{HLS}, where it is shown that institutional traders have an informational advantage which allows to predict to some extent the time and the content of news announcements as well as the returns on the announcement date.
Furthermore, informed trading represents one of the sources of the profits of high-frequency strategies, as high-frequency traders have access to information which is not available to ordinary market participants. This information-based explanation of high-frequency profits has been recently addressed in \cite{JL12} and \cite{KP15}.

\bibliographystyle{alpha}
\bibliography{biblio_jumps}
\end{document}